\newtheorem{thm}{Theorem}
\newtheorem{lm}{Lemma}
\newcommand{\GF}{\mathrm{GF}}
\newcommand{\NN}{\mathbb{N}}
\newcommand{\FF}{\mathbb{F}}
\newcommand{\cC}{\mathcal{C}}
\newcommand{\cP}{\mathcal{P}}
\newcommand{\cR}{\mathcal{R}}
\newcommand{\cL}{\mathcal{L}}
\newcommand{\cO}{\mathcal{O}}
\newcommand{\cS}{\mathcal{S}}
\newcommand{\PG}{\mathrm{PG}}
\newcommand{\PGL}{\mathrm{PGL}}
\newcommand{\bPG}{\mathbb{PG}}
\title{LDPC codes from Singer cycles}
\author{ \\
Luca Giuzzi \\[-7pt]
\\
Dipartimento di Matematica \\
Politecnico di Bari \\
Via Orabona, 4 \\
70125 Bari, Italy \\
\emph{Email:} \texttt{giuzzi@poliba.it}
\and
\\
Angelo Sonnino \\[-7pt]
\\
Dipartimento di Matematica e Informatica \\
Universit\`a della Basilicata \\
Campus Macchia Romana \\
Viale dell'Ateneo Lucano, 10 \\
85100 Potenza, Italy \\
\emph{Email:} \texttt{angelo.sonnino@unibas.it}}
\begin{document}
\maketitle
\begin{abstract}
The main goal of coding theory is to devise efficient systems to
exploit the full capacity of a communication channel, thus achieving
an arbitrarily small error probability. Low Density Parity Check
(LDPC) codes are a family of block codes---characterised by admitting
a sparse parity check matrix---with good correction capabilities. In
the present paper the orbits of subspaces of a finite projective space
under the action of a Singer cycle are investigated.
\end{abstract}
\section{Introduction}
A $[n,k,d]$--linear code over $\GF(q)$ is a monomorphism
$\theta$ from
$M=\GF(q)^k$ into $R=\GF(q)^n$ such that the images of any two
distinct vectors $\mathbf{m_1},\mathbf{m_2}\in M$ differ in 
at least $d$ positions. The elements of $M$ are called
messages, while the elements of the image $\cC=\theta(M)$ are
the codewords of $\theta$. The function
\[ d:\begin{cases}
  R\times R\mapsto\NN \\
  (\mathbf{x},\mathbf{y})\mapsto |\{i: x_i-y_i\neq 0\}| 
  \end{cases}\]
is the \emph{Hamming distance on $R$}.
In the present paper we shall usually identify a code
with the set of its codewords.

The problem of \emph{minimum distance decoding} is to
find, for any given vector $\mathbf{r}\in R$ the
set $\cC_{\mathbf{r}}$ of all the codewords
$\mathbf{c}\in\cC$ at minimum Hamming distance
from $\mathbf{r}$.
If $\cC_{\mathbf{r}}$ contains just one element $\mathbf{c}$,
then we can uniquely determine a message $\mathbf{m}$ such
that $\theta(\mathbf{m})=\mathbf{c}$ and we
state that the decoding of $\mathbf{r}$
has succeeded; otherwise, we remark that it has not been
possible to correctly recover the message originally sent.
\par
Minimum distance decoding is, in general, a hard 
problem, see \cite{NP1}; in fact, many algorithms currently
in use  sacrifice some of the abstract correcting capabilities
of a code in favour of ease of implementation and lower
complexity; notable examples
are the syndrome decoding technique for general linear
codes and the Welch--Berlekamp approach for
BCH codes, see \cite{Me}.
We remark, however, that even these techniques may be
prohibitively expensive when the length $n$ of
the code, that is the dimension of the vector
space $R$, is large.
\par
On the other hand, long codes present several
advantages, since it can be shown that almost
all codes with large $n$ have excellent correction
capabilities, see \cite{WC}.
It is thus important to
find some special families of codes for which good
encoding and decoding
techniques are known.
\par
Low Density Parity Check (LDPC) codes have been
introduced by Gallager in \cite{Gal1},\cite{Gal2} and then, ignored
for almost 30 years.
They have been recently rediscovered, see
\cite{M1}; and it has been realised that
they may be applied to high--speed, high--bandwidth
digital channels and support efficient decoding algorithms
based upon message--passing strategies.
Furthermore,
the performance of some of these codes is remarkably
close to the Shannon limit for the AWGN channel;
consequently, they turn out to be very competitive,
even when compared with more elaborate constructions,
like  turbo codes, see \cite{M1}.
Nevertheless, the problem of providing efficient encoding is,
in general,  non--trivial, see \cite{NP2}, although, in several cases,
manageable, see  \cite{ru}.
This motivates the search for new ways of
constructing suitable parity--check matrices for broad classes of LDPC
codes.
\par
A linear code is Low Density Parity Check (for short, LDPC) if
it admits at least one sparse parity check matrix.
In particular, a LDPC code is
\emph{regular} if the 
set of its codewords is
the null space of a parity check matrix $\mathbf{H}$
with the following structural properties:
\begin{enumerate}[(L1)]
\item each row of $\mathbf{H}$ contains $k$ non--zero entries;
\item each column of $\mathbf{H}$ contains $r$ non--zero entries;
\item the number of non--zero entries in common between any two
distinct columns of $\mathbf{H}$ is at most $1$;
\item both $k$ and $r$ are small compared with the length $n$ of the
code and the number of rows in $\mathbf{H}$.
\end{enumerate}
In this paper we describe an algorithm for the construction of
such parity--check matrices  based on the orbits of subspaces
of a finite projective space $\mathrm{PG}(n-1,q)$, with $q$ even,
under the action of a Singer cycle. More precisely, some cyclic and
almost cyclic LDPC codes are constructed using some suitable
representatives for each of these orbits.

\section{Preliminaries: incidence matrices} 
There is a
straightforward correspondence between
finite incidence structures and binary matrices.
This topic has been widely investigated, also
in the context of coding theory; see \cite{AK}.

Given a binary matrix $M$, it is always possible to
introduce an
incidence structure $\cS_M=(\cP,\cL,I)$ such that
the points $\cP$ are the columns of $M$, the blocks
$\cL$ are the rows of $M$ and $P\in\cP$ is
incident with $L\in\cL$ if and only 
if $m_{LP}=1$. Conversely,
given an
incidence structure $\cS$ with
point--set $\cP=\{p_1,p_2,\ldots,p_v\}$
and block--set $\cL=\{l_1,l_2,\ldots,l_b\}$,
the \emph{incidence matrix} $M=(m_{ij})$ of $\cS$
is the binary $b\times v$--matrix with
\[
m_{ij}=\begin{cases}
  1 & \mbox{ if $p_jIl_i$} \\
  0 & \mbox{ otherwise.}
\end{cases}
\]
Recall that an incidence structure
$\cS=(\cP,\cL,I)$ is written simply
as $(\cP,\cL)$ when any element $L\in\cL$
is a subset of $\cP$, and given $p\in\cP$
and $L\in\cL$, we have $pIL$ if, and only if,
$p\in L$.

A map
 $\varphi:\cP\cup\cL\rightarrow\cP\cup\cL$ is
a collineation of $\cS$ when $\varphi$ maps points into points, blocks
into blocks and preserves all incidences.

In particular, we are interested in incidence structures
endowed with
%
collineation group $G$ acting regularly on the points.
In this
case it is quite easy to write all the blocks in
$\cL$ and the associated incidence matrix $\mathbf{H}$
has a special form.
We proceed as follows.

Fix a point $P\in\mathcal{P}$ and let
$\mathcal{T}=\{\ell_{1},\ell_{2},\ldots ,\ell_{h}\}$ be the set of all
blocks of $\mathcal{S}$ incident with $P$ and such that:
\begin{enumerate}[1.]
\item $\ell_{i}^{g}\neq\ell_{j}$ for any $g\in G$ and $1\leq i<j\leq
h$;
\item for any line $\ell\in\mathcal{L}$ there is a block
$\ell_{i}\in\mathcal{T}$ and a $g\in G$ such that $\ell
=\ell_{i}^{g}$.
\end{enumerate}
The set $\mathcal{T}$ is called a \emph{starter set} for 
$\cS$, see
\cite{BJL}. Clearly,
 \[ \bigcup_{j=1}^{h}\{\,\ell_{j}^{g}\mid g\in G\,\}; \]
 therefore, the whole incidence matrix $\mathbf{H}$ of
 $\cS$ can be reconstructed by just providing a suitable
 starter set $\mathcal{T}$ and generators of the group $G$.

If we further suppose $G$ to be cyclic, let $\tau$ be one of its
generators, then the incidence structure
$$\mathcal{S}=(\mathcal{P},\{\,\ell_{j}^{\tau^{i}}\mid1\leq i\leq
|G|,\ 1\leq j\leq h\,\})$$
admits at least a
circulant incidence matrix $\mathbf{H}$; that is,
a block matrix $\mathbf{H}$ 
of type
$$\mathbf{H}=\begin{pmatrix}
H_{1} \\ \vdots \\ H_{h}
\end{pmatrix}$$
wherein any row $H_{j}$, $j>1$ is obtained from the preceding
one $H_{j-1}$ by applying a cyclic right shift.
Clearly, this happens provided that the
points and blocks of $\mathcal{S}$ are arranged is such a way as
$P_{i}=P^{\tau^{i-1}}$ and $\ell_{i}=\ell_{j}^{\tau^{i-1}}$ with
$j\in\{1,2,\ldots ,h\}$.
\par


\section{Preliminaries: projective spaces and spreads}
Let $\PG(V,\FF)$ be the projective space whose elements are the vector
subspaces of the vector space $V$ over the field $\FF$..
 We denote by the
same symbol a $1$--dimensional vector subspace of $V$ and the
corresponding element of $\PG(V, \FF)$.
An element $T$ of $\PG(V,\FF )$ has \emph{rank} $t$ and
\emph{dimension} $t-1$, whenever $T$ has dimension $t$ as a vector space over
$\FF$.
When the dimension of $V$ over $\FF =\GF(q)$ is finite and equal to
$n$, we shall usually write $\PG(n-1,q)$ instead of $\PG(V,\FF )$.
The elements
of rank $1$, $2$, $3$ and $n-1$ in $\PG(n-1,q)$ are  called
respectively \emph{points},
\emph{lines}, \emph{planes} and \emph{hyperplanes}.
 Points contained
in the same line are said to be \emph{collinear}.
Observe that, for any $i\geq 1$, 
\[ \PG_i(V)=(\cP,\cL), \]
where $\cP=\{W\leq V: \dim W=1\}$ and
$\cL=\{X\leq V: \dim X=i+1\}$ is an incidence 
structure.

Let now
$\{E_0,E_1,\dots,E_{n-1}\}$ be a fixed basis of $V$.
Then, we say that the point
$\langle x_0E_0+x_1E_1,+\cdots+x_{n-1}E_{n-1}\rangle$ of $\PG(V,\FF )$
has homogeneous projective coordinates $(x_0,x_1,\dots,x_{n-1})$.

The number of points of $\PG(n-1,q)$ is 
$$\frac{q^{n}-1}{q-1}=q^{n-1}+\cdots+q+1,$$ 
while the number of its lines is 
$$\frac{(q^{n}-1)(q^{n-1}-1)}{(q^2-1)(q-1)},$$
see \cite{PGOFF}.
In particular, the number of lines of $\PG(2,q)$,
$\PG(3,q)$ and $\PG(4,q)$ are
respectively  $q^2+q+1$, $(q^2+1)(q^2+q+1)$  and 
$(q^2+1)(q^4+q^3+q^2+q+1)$.

The points and the lines of a projective space $\PG(n-1,q)$
form a $2$--design, whose incidence matrix $M$
defines a regular LDPC code, called  $\bPG^{(1)}$
in \cite{LDPC}.
The code defined by the transposed matrix $M$ is also
LDPC and it is called $\bPG^{(2)}$ in the aforementioned paper.
%


An application $A$ of $V$ in itself is a semilinear map 
if, and only if, there is an automorphism $\mu$ of $\FF$ such that for
all vectors $v,w\in V$ and all elements $\alpha\in\FF$,
\[(v+w)^A=v^A+w^A,\qquad (\alpha v)^A=\alpha^{\mu}v^A.\]
If $A$ is a bijection, then we say that it is \emph{non--singular}.
When $\mu=id$, then $A$ is called \emph{linear}. 
Any non--singular semilinear  map of $V$ in itself
induces a collineation $\tau_A$ of $\PG(V,\FF )$
which maps the point $\langle v \rangle$ into the point
$\langle v^A\rangle$.
Conversely, given any collineation $\tau$ of $\PG(V,\FF )$,
there is a non--singular semilinear map $A$ of $V$ such that $\tau=\tau_A$. 

A Singer cycle $S$ is a cyclic collineation group of $\PG(n-1,q)$
acting regularly on the points; that is to say that $S$ has order
$q^{n-1}+\cdots+q+1$,  is cyclic and only the identity fixes any
point.
If $S$ and $S'$ are two Singer cycles of $\PG(n-1,q)$,
then there is a collineation $\tau$ of  $\PG(n-1,q)$
such that $S'=\tau^{-1} S \tau$. 
    
As usual, we write $\PG(n-1,q)$ for $\PG(\GF(q^{n}),\GF(q))$.
Let $\alpha$ be a generator of the multiplicative group of $\GF(q^n)$;
then, the map $\sigma$ of $\GF(q^{n})$
into itself defined by $\sigma :x \mapsto \alpha x$
is a non--singular $\GF(q)$--linear map.
Observe that $\sigma$, as a linear map, has order $q^{n}-1$
and defines a collineation of $\PG(n-1,q)$ of order
$q^{n-1}+\cdots+q+1$ 
acting transitively on the points of $\PG(n-1,q)$; hence,
the collineation group $S$ generated by $\sigma$ is indeed
a Singer cycle of $\PG(n-1,q)$. 

We can regard the projective space $\PG(n-1,q)=\PG(V,\GF(q))$
as a distinguished hyperplane of $\PG(n,q)=\PG(V',\GF(q))$,
with $V'=\langle E \rangle \oplus V$.
Let now $\sigma$ be the generator of a Singer cycle of $\PG(n-1,q)$.
 The map 
 \[ \sigma' : x E + v \mapsto x E + v^{\sigma} \]
 gives a cyclic collineation group $\tilde{S}$ of order $q^{n}-1$,
 which induces a Singer cycle on the hyperplane $\PG(n-1,q)$ and acts
 regularly on the points of $\PG(n,q)\setminus\PG(n-1,q)$
 different from $\langle E\rangle$.
 Furthermore, this group $\tilde{S}$ acts transitively on
 the lines of $\PG(n,q)$ incident with $\langle E \rangle$.

A \emph{$(t-1)$--spread} $\cS$ of a projective space $\PG(n-1,q)$
is a family  of mutually disjoint subspaces, each of rank $t$,
such that each point of $\PG(n-1,q)$ belongs to exactly
one element of ${\cS}$.
It has been proved by Segre \cite{Segre1964} that
a $(t-1)$--spread of $\PG(n-1,q)$ exists if and only if $n=rt$.
A spread   ${\cS}$ with $t=2$ is called  \emph{line--spread}.

Let now $\cS$ be a $(t-1)$--spread of $\PG(rt-1,q)$.
It is possible to embed $\PG(rt-1,q)$ into $\PG(rt,q)$
as a hyperplane, and then to introduce a new incidence
structure $A({\cS})$  as follows.
The points of $A({\cS})$  are the points of
$\PG(rt,q)\setminus\PG(rt-1,q)$.
The lines of $A({\cS})$ are the $t$--dimensional subspaces
of $\PG(rt,q)$ which are not contained in $\PG(rt-1,q)$
but contain an element of ${\cS}$.
The incidence relation is the natural set-theoretical one.
The incidence structure $A({\cS})$ is a $2-(q^{rt},q^t,1)$
translation design with parallelism, see \cite{Barlotti-Cofman1974}. 
The $(t-1)$--spread $\cS$  is called Desarguesian when
 $A({\cS})$ is isomorphic to the affine space $AG(r,q^t)$.
We provide two different characterisations of Desarguesian spreads,
according as $r=2$ or $r\neq 2$.

When $r=2$,  the projective space has dimension $2t-1$.
A \emph{regulus} ${\cR}$ of $\PG(2t-1,q)$
is a set of $q+1$ mutually disjoint $(t-1)$--dimensional
subspaces such that each line intersecting three elements
of ${\cR}$ has a point in common with all the subspaces of ${\cR}$. 

If $A$, $B$, $C$ are three mutually disjoint 
$(t-1)$--dimensional subspaces of $\PG(2t-1,q)$,
then there is a unique regulus ${\cR}(A,B,C)$ of $\PG(2t-1,q)$
containing $A$, $B$ and $C$.
A spread ${\cS}$ is \emph{regular} if the regulus ${\cR}(A,B,C)$
is contained in ${\cS}$ whenever $A$, $B$ and $C$ are three
distinct element of ${\cS}$. 
 
\begin{thm}[\cite{Bruck-Bose1964}]
Suppose $q>2$. A $(t-1)$--spread $\cS$ of $\PG(2t-1,q)$
is Desarguesian if and only if it is regular.
\end{thm}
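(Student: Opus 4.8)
This is the classical Bruck--Bose characterisation, so the plan is to prove the two implications by coordinatising, the real work being in ``regular $\Rightarrow$ Desarguesian''.

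\emph{Desarguesian $\Rightarrow$ regular.} Realise $\PG(2t-1,q)$ as $\PG(V,\GF(q))$ with $V=\GF(q^{t})^{2}$ regarded as a $2t$-dimensional $\GF(q)$-space, so that the Desarguesian spread $\cS$ is exactly the family of rank-$1$ $\GF(q^{t})$-subspaces of $V$, i.e.\ the points of $\PG(1,q^{t})$. Each $\GF(q^{t})$-linear map of $V$ is $\GF(q)$-linear, hence induces a collineation of $\PG(2t-1,q)$ permuting $\cS$; the resulting collineation group contains a copy of $\PGL(2,q^{t})$ acting sharply $3$-transitively on the $q^{t}+1$ members of $\cS$. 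Since a collineation carries the unique regulus through three mutually disjoint $(t-1)$-subspaces onto the regulus through their images, it is enough to verify for one triple --- say $\langle(1,0)\rangle$, $\langle(0,1)\rangle$, $\langle(1,1)\rangle$ --- that its regulus lies in $\cS$; inspecting the transversals shows this regulus to be $\{\langle(1,m)\rangle: m\in\GF(q)\}\cup\{\langle(0,1)\rangle\}$, each of whose members is a $\GF(q^{t})$-subspace. Conjugating by $\PGL(2,q^{t})$ gives regularity for an arbitrary triple.

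\emph{Regular $\Rightarrow$ Desarguesian.} Pass to the spread set. Fixing three members $A_{0},A_{1},A_{\infty}\in\cS$ and using that two disjoint $(t-1)$-subspaces span $\PG(2t-1,q)$, coordinatise so that $V=W\oplus W$ with $\dim_{\GF(q)}W=t$, $A_{\infty}=\{0\}\oplus W$, $A_{0}=W\oplus\{0\}$, $A_{1}=\{(w,w):w\in W\}$; then every member of $\cS$ other than $A_{\infty}$ is the graph $\{(w,fw):w\in W\}$ of a unique $f\in\mathrm{End}_{\GF(q)}(W)$, and the set $\mathcal{F}$ of these maps satisfies $0,\mathrm{id}\in\mathcal{F}$, $|\mathcal{F}|=q^{t}$, and the difference of two distinct members of $\mathcal{F}$ is invertible. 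Step one: the regulus through $A_{\infty},\mathrm{graph}(f),\mathrm{graph}(g)$ is $\{\mathrm{graph}(f+c(g-f)):c\in\GF(q)\}\cup\{A_{\infty}\}$, so regularity forces $\mathcal{F}$ to contain every line through two of its points (and, taking $f=0$, to be closed under scalar multiples); since $q>2$, this makes $\mathcal{F}$ a $\GF(q)$-subspace of dimension $t$. Step two: applying the collineation that interchanges the two summands of $V$ --- it sends $\mathrm{graph}(f)$ to $\mathrm{graph}(f^{-1})$ --- the regulus through $A_{0},\mathrm{graph}(a),\mathrm{graph}(b)$ becomes $\{\mathrm{graph}\bigl(((1-c)a^{-1}+cb^{-1})^{-1}\bigr):c\in\GF(q)\}\cup\{A_{0}\}$, so regularity forces the inverse of every point of the line through $a^{-1}$ and $b^{-1}$ to lie in $\mathcal{F}$; again since $q>2$, this means $\mathcal{F}^{-1}\cup\{0\}$ is also a $\GF(q)$-subspace, equivalently $a(a+b)^{-1}b\in\mathcal{F}$ for all $a,b\in\mathcal{F}$.

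To finish I would argue algebraically. Specialising $b$ to scalar maps in $a(a+b)^{-1}b\in\mathcal{F}$ gives $(a+e\,\mathrm{id})^{-1}\in\mathcal{F}$ for every $a\in\mathcal{F}$ and every $e\in\GF(q)$, whence --- by translating $a$ by a scalar --- $\mathcal{F}$ is closed under inversion. An inversion-closed $\GF(q)$-subspace that contains $\mathrm{id}$ and all of whose nonzero members are invertible is then closed under the operators $x\mapsto cxc$ (because $[c(\mathrm{id}+ac)]^{-1}=(\mathrm{id}+ac)^{-1}c^{-1}\in\mathcal{F}$, a short consequence of $a(a+b)^{-1}b\in\mathcal{F}$), hence contains $\GF(q)[a]$ for every $a$; thus $\mathcal{F}$ is a finite Jordan division algebra over $\GF(q)$, and by the Jordan analogue of Wedderburn's little theorem it is a field, necessarily $\cong\GF(q^{t})$. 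A spread with this spread set is the Desarguesian one.

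The hypothesis $q>2$ is used exactly in the two steps above: a regulus has $q+1$ members, so only when $q+1\ge4$ does ``$\cS$ contains the regulus through a given triple'' say anything, and the passage from ``$\mathcal{F}$ contains every line through two of its points'' to ``$\mathcal{F}$ is a subspace'' also needs $q>2$. For $q=2$ a regulus is just a triple of mutually disjoint $(t-1)$-subspaces, every spread is vacuously regular, non-Desarguesian translation planes of order $2^{t}$ exist, and the statement fails. The step I expect to be the main obstacle is the algebraic endgame of the second implication: extracting inversion-closure and then the full field structure of $\mathcal{F}$ from the two subspace properties; in particular one must handle with care the implication ``closed under $x\mapsto cxc$'' $\Rightarrow$ ``field'', where in characteristic $2$ the naive polarisation degenerates and one instead uses that the Frobenius is a bijection of $\mathcal{F}$.
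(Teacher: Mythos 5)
First, note that the paper does not prove this statement at all: it is quoted as a known result with the citation to Bruck and Bose (1964), so there is no ``paper's proof'' to compare against; your attempt has to stand on its own. The easy direction (Desarguesian $\Rightarrow$ regular, via the $\GF(q)$--linear representation of $\PG(1,q^t)$, three--transitivity of $\PGL(2,q^{t})$ on the spread and one explicit regulus) is sound, and in the hard direction your two regulus computations are also correct: regularity does force the spread set $\mathcal{F}$ to be a $\GF(q)$--subspace (this is exactly where $q>2$ enters) and, via the swap collineation, forces $\{f^{-1}:0\neq f\in\mathcal{F}\}\cup\{0\}$ to be a subspace as well, which as you say is equivalent to $a(a+b)^{-1}b\in\mathcal{F}$; your deduction of inversion--closure of $\mathcal{F}$ from this is also fine.

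The genuine gap is the endgame, i.e.\ precisely the step that carries the whole weight of the theorem: showing that a $\GF(q)$--subspace of $\mathrm{End}_{\GF(q)}(W)$ containing $\mathrm{id}$, with all nonzero members invertible and closed under inversion, must be a field. Your justification for closure under $x\mapsto cxc$ --- the parenthetical identity $[c(\mathrm{id}+ac)]^{-1}=(\mathrm{id}+ac)^{-1}c^{-1}\in\mathcal{F}$ --- is circular/garbled as written (it presupposes membership of $c+cac$, which is what you want); the correct tool is Hua's identity $cxc=c-\bigl(c^{-1}+(x^{-1}-c)^{-1}\bigr)^{-1}$, and even then the degenerate cases (singular intermediate elements, few admissible scalars when $q$ is small) are exactly the ones you never handle. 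Worse, the final appeal to ``the Jordan analogue of Wedderburn's little theorem'' is not a usable quotation: in odd characteristic it hides a nontrivial classification of finite Jordan division algebras, and in characteristic $2$ --- the only case this paper actually uses, since it works in $\PG(3,2^{e})$ --- the linear Jordan product degenerates and one needs quadratic Jordan algebra machinery; your one--line remark about the Frobenius being a bijection of $\mathcal{F}$ is not an argument. So as it stands the hard implication is reduced to, but not proved beyond, its crux. A cleaner way to close the argument (and closer to the classical treatments) is to show that regularity forces the kernel of the associated translation plane, i.e.\ the centraliser of $\mathcal{F}$ in $\mathrm{End}_{\GF(q)}(W)$, to have order $q^{t}$ --- compare Theorem~\ref{thm:3} of this paper --- or to invoke the Moufang/Artin--Zorn route, rather than the Jordan--algebra detour.
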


We now consider the case $r>2$.
A $(t-1)$--spread is \emph{normal} when it induces a spread in
any subspace generated by any two of its elements, see
\cite{Lunardon1997}.
In particular, fix $T=\langle A,B\rangle$ with $A,B\in\cS$.
Then, for any $C\in\cS$, either $C\subseteq T$ or $C\cap\cS=\emptyset$.
Such spreads are called \emph{geometric} in \cite{Segre1964}.
 
\begin{thm}[\cite{Barlotti-Cofman1974}] For $r >2$,  the 
$(t-1)$--spread $\cS$ is Desarguesian if and 
only if it is normal.
\end{thm}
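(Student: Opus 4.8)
The plan is to prove the two implications separately: the direction ``Desarguesian $\Rightarrow$ normal'' is routine, while the converse is the substantial one, and it is there that the hypothesis $r>2$ is used essentially.

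\textbf{Desarguesian $\Rightarrow$ normal.} I would realise the Desarguesian spread through field reduction. Writing $V=\GF(q)^{rt}$ and identifying it with $\GF(q^t)^r$, the $1$--dimensional $\GF(q^t)$--subspaces of $V$ are exactly the $t$--dimensional $\GF(q)$--subspaces forming the canonical spread $\cS_0$ of $\PG(rt-1,q)=\PG(r-1,q^t)$, and a direct check gives $A(\cS_0)\cong\AG(r,q^t)$. If $A(\cS)\cong\AG(r,q^t)$, then coordinatising this affine space over $\GF(q^t)$ recovers, as its set of parallel classes, the spread $\cS$ at infinity and identifies it with $\cS_0$ up to a collineation of $\PG(rt-1,q)$; since collineations preserve normality, it suffices to note that $\cS_0$ is normal, which is immediate: for $A,B\in\cS_0$ the $\GF(q)$--span $\langle A,B\rangle$ is the $\GF(q^t)$--line through $A$ and $B$, partitioned by its $q^t+1$ points, all of which lie in $\cS_0$.

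\textbf{Normal $\Rightarrow$ Desarguesian.} Assume $\cS$ is a normal $(t-1)$--spread of $\PG(rt-1,q)$ with $r\geq 3$; the idea is to reconstruct the field $\GF(q^t)$ from the combinatorics of $\cS$ by passing to a quotient geometry. First I would record two consequences of normality. (i) For distinct $A,B\in\cS$ the induced sub-spread partitions the whole $(2t-1)$--space $\langle A,B\rangle$, so it has exactly $q^t+1$ members. (ii) Iterating the definition, $\cS$ induces a normal spread $\cS_\Sigma$ in every subspace $\Sigma$ spanned by a subset of the members of $\cS$: indeed, if $x\in\langle A_1,\dots,A_k\rangle$, then by induction on $k$ one finds $W\subseteq\langle A_1,\dots,A_{k-1}\rangle$ with $x\in\langle W,A_k\rangle$, whence $x$ lies in a spread element contained in $\langle A_1,\dots,A_k\rangle$. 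Now define the incidence structure $\Pi=\Pi(\cS)$ whose points are the elements of $\cS$ and whose lines are the sets $\ell_{A,B}=\{\,C\in\cS:C\subseteq\langle A,B\rangle\,\}$ for $A\neq B$; by (i), two points of $\Pi$ lie on a unique line.

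The key step is to verify the Veblen--Young axioms for $\Pi$; only the Veblen (Pasch) axiom is non-trivial. Let $\ell_{A,B}$ and $\ell_{A,C}$ be distinct lines through $A$ and take $D\in\ell_{A,B}$, $E\in\ell_{A,C}$ with $D\neq B$, $E\neq C$. Work inside $\Gamma=\langle A,B,C\rangle$, a $(3t-1)$--space carrying the induced normal spread $\cS_\Gamma$ by (ii). Since distinct spread elements are disjoint and $\langle B,C\rangle$ is covered by members of $\cS_\Gamma$, one checks that $D\cap\langle B,C\rangle=0$, so $\langle B,C,D\rangle=\Gamma$ and the Grassmann identity gives $\dim(\langle D,E\rangle\cap\langle B,C\rangle)=2t+2t-3t=t$. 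On the other hand, any point of this intersection lies in a common member $W$ of $\cS_\Gamma$ (it is covered simultaneously by the sub-spreads of $\langle D,E\rangle$ and of $\langle B,C\rangle$, which share $W$ by uniqueness), so the intersection contains $W$ and, having dimension $t$, equals $W$; thus $W\in\ell_{D,E}\cap\ell_{B,C}$. Here $r>2$ is used repeatedly---to keep $\Gamma$ a proper subspace and, globally, to make $\Pi$ have projective dimension $r-1\geq 2$. By Veblen--Young, $\Pi$ is a projective space of dimension $r-1$ (a maximal independent set of points spans all of $V$) with $q^t+1$ points per line by (i), so $\Pi\cong\PG(r-1,q^t)$. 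When $r\geq 4$ this space is automatically Desarguesian, and coordinatising it over $\GF(q^t)$ transports a $\GF(q^t)$--vector space structure to $V$---equivalently, the kernel $\{f\in\mathrm{End}_{\GF(q)}(V):W^{f}\subseteq W\ \forall W\in\cS\}\cup\{0\}$ of $\cS$ is a field of degree $t$ over $\GF(q)$---whose $1$--spaces are the members of $\cS$; hence $\cS$ is equivalent to $\cS_0$ and $A(\cS)\cong\AG(r,q^t)$.

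\textbf{The main obstacle} is the case $r=3$, where $\Pi(\cS)$ is only a projective plane of order $q^t$ and need not a priori be $\PG(2,q^t)$, so that Desarguesianness must be extracted from the ambient normal spread of $\PG(3t-1,q)$ rather than from the abstract plane. For $q>2$ this can be circumvented: projecting from a spread element $D\notin\langle A,B\rangle$ inside $\langle A,B,D\rangle$ realises, for every $C\in\ell_{A,B}$, the regulus $\cR(A,B,C)$ as a subset of the induced sub-spread of $\langle A,B\rangle$, so that sub-spread is regular and hence Desarguesian by Theorem~1---enough to run the coordinatisation. For $q=2$, where Theorem~1 is unavailable, one is forced to estimate directly the degree of the kernel of $\cS$, and it is precisely this estimate that requires the full strength of normality together with $r\geq 3$.
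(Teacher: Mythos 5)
The paper does not actually prove this statement---it is quoted verbatim from \cite{Barlotti-Cofman1974}---so your attempt has to be measured against the classical argument rather than against anything in the text. Your easy direction and the construction of the quotient structure $\Pi(\cS)$ (points the elements of $\cS$, lines the induced spreads on spans of pairs), together with the Grassmann/uniqueness verification of the Veblen axiom, are sound and standard. The trouble is that the quotient is the wrong object to coordinatise, and this produces two genuine gaps. First, for $r\geq 4$ the step ``$\Pi\cong\PG(r-1,q^t)$, hence coordinatisation transports a $\GF(q^t)$--vector space structure to $V$ whose $1$--spaces are the members of $\cS$'' is asserted, not proved: an abstract isomorphism of the quotient incidence structure gives no map on vectors of $V$ whatsoever, and the equivalent statement that the kernel $\{f\in\mathrm{End}_{\GF(q)}(V): W^f\subseteq W\ \forall W\in\cS\}$ has order $q^t$ is exactly the crux of the theorem; you never construct a single nontrivial kernel endomorphism. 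Second, the case $r=3$ is not handled: your $q>2$ patch rests on the unproved claim that projection from a spread element $D\not\subseteq\langle A,B\rangle$ forces $\cR(A,B,C)\subseteq\cS$, and even granting regularity of the induced spreads, ``enough to run the coordinatisation'' is again the unexplained transport step; for $q=2$ you concede you have no argument, although the theorem (and its use in this paper, where $q=2^e$ includes $q=2$) covers that case.

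The classical route avoids both problems by applying the synthetic machinery one dimension up, to $A(\cS)$ itself rather than to $\Pi$. Normality guarantees that the coset $v+\langle A,B\rangle$ of the span of two spread elements carries the structure of a translation plane of order $q^t$, hence an affine plane, and that these planes are the planes of the linear space $A(\cS)$; one then either verifies the Veblen axiom for the projective completion obtained by adjoining $\cS$ as hyperplane at infinity (a computation of the same Grassmann type as yours, but with affine points involved), or invokes the characterisation of affine spaces as linear spaces of dimension at least $3$ all of whose planes are affine of order at least $4$. Since $r\geq 3$, the resulting space is automatically coordinatisable over a (finite, hence commutative) field with $q^t$ elements, giving $A(\cS)\cong\AG(r,q^t)$ directly---which is the definition of Desarguesian used here---and the plane case $r=3$, including $q=2$, comes for free because it is the $3$--dimensional structure $A(\cS)$, not the plane $\Pi$, that is being coordinatised. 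I would also tighten your first direction: rather than asserting that an abstract isomorphism $A(\cS)\cong\AG(r,q^t)$ ``identifies $\cS$ with $\cS_0$ up to a collineation'' (which is itself a cited theorem), note that such an isomorphism conjugates translation groups, hence is additive after matching origins, and pulling back scalar multiplication already shows each span $\langle A,B\rangle$ is a union of spread elements.
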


\section{The $\GF(q)$--linear representation of  $\PG(r-1,q^t)$}

Let $V$ be a $r$--dimensional vector space over $\GF(q^t)$,
and let, as usual, $\PG(r-1,q^t)=\PG(V,\GF(q^t))$.

We may regard $V$ as
a vector space of dimension $rt$ over $\GF(q)$;
hence, each point $\langle x\rangle$ of
$\PG(r-1,q^t)$ determines a $(t-1)$--dimensional subspace $P(x)$ of the
projective space $\PG(V,\GF(q))=\PG(rt-1,q)$; likewise,  each line $l$
of $\PG(r-1,q^t)$ defines a $(2t-1)-$dimensional subspace $P(l)$ of
$\PG(rt-1,q)$.

Write ${\cL}$ for  the set of all
$(t-1)$--dimensional subspaces of
$\PG(V,\GF(q))$, each obtained as $P(x)$, with
 $x$   a point of $\PG(r-1,q^t)$.
Then, ${\cL}$ is a $(t-1)$--spread of
$\PG(rt-1,q)$; this is called the $\GF(q)$--linear
representation of $\PG(r-1,q^t)$. 
It has been shown that ${\cL}$ is Desarguesian
and any Desarguesian spread of $\PG(rt-1,q)$ is isomorphic to ${\cL}$,
see \cite{Bruck-Bose1964} for $r=2$ and
\cite{Segre1964}, \cite{Barlotti-Cofman1974} for $r>2$. 


\begin{thm}
\label{thm:3}
A $(t-1)$--spread $\cS$  of $\PG(rt-1,q)$ is Desarguesian if and only if
there is a collineation group of $\PG(rt-1,q)$ of order
$q^{t-1}+q^{t-2}+\cdots+q+1$ fixing all elements of $\cS$.
\end{thm}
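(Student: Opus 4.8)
The plan is to deduce the statement from the $\GF(q)$--linear representation $\cL$ of $\PG(r-1,q^t)$, together with the fact recalled above that a $(t-1)$--spread of $\PG(rt-1,q)$ is Desarguesian precisely when it is isomorphic to $\cL$ (see \cite{Bruck-Bose1964,Segre1964,Barlotti-Cofman1974}). For the \emph{necessity}, assume $\cS$ is Desarguesian. After applying a collineation we may take $\cS=\cL$, realised inside $V=\GF(q^t)^r$ viewed as a $\GF(q)$--space, the elements of $\cL$ being the $1$--dimensional $\GF(q^t)$--subspaces of $V$. For each $\lambda\in\GF(q^t)^\ast$ the scalar multiplication $v\mapsto\lambda v$ is a $\GF(q)$--linear automorphism of $V$ stabilising every $\GF(q^t)$--line through the origin, hence inducing a collineation of $\PG(rt-1,q)$ fixing each element of $\cL$; this collineation is the identity exactly when $\lambda\in\GF(q)^\ast$. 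So these maps induce a collineation group of order $(q^t-1)/(q-1)=q^{t-1}+\cdots+q+1$ of the required kind.

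For the \emph{sufficiency}, let $G$ be a collineation group of order $q^{t-1}+\cdots+q+1$ fixing every element of $\cS$; the case $r=1$ is trivial, so assume $r\geq 2$. I would work with the ring $\cO$ consisting of $0$ together with all $\GF(q)$--linear automorphisms of $V$ that map each element of $\cS$ (regarded as a $t$--dimensional $\GF(q)$--subspace) into itself; it is immediate that $\cO$ is a subring of $\mathrm{End}_{\GF(q)}(V)$ containing the scalar maps $\GF(q)I$. The crucial claim, and the step I expect to be the real obstacle, is that \emph{$\cO$ is a field}. To prove it, take $0\neq a\in\cO$ and suppose $av=0$ for some $v\neq0$; let $U_0\in\cS$ contain $\langle v\rangle$. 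Since the members of $\cS$ are pairwise disjoint and span $V$, there exist $U_1\in\cS\setminus\{U_0\}$ and $w\in U_1$ with $aw\neq0$. Then $v+w\neq0$ (otherwise $w\in U_0\cap U_1=\{0\}$), and $\langle v+w\rangle$ lies in some $U_2\in\cS$ with $U_2\neq U_0$ and $U_2\neq U_1$ (again because distinct spread elements are disjoint); hence $aw=a(v+w)\in U_1\cap U_2=\{0\}$, a contradiction. Thus every nonzero element of $\cO$ is injective, therefore invertible, so $\cO$ is a finite division ring and, by Wedderburn's theorem, a finite field.

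It remains to identify $\cO$. Writing $q=p^e$ with $p$ prime, the companion--automorphism map sends $G$ onto a subgroup of $\mathrm{Gal}(\GF(q)/\GF(p))\cong\mathbb{Z}/e$, so the subgroup $G_0$ of collineations in $G$ induced by $\GF(q)$--linear maps has index dividing $e$. The linear lifts of the elements of $G_0$ form a subgroup $\widehat G\leq\cO^\ast$ with $|\widehat G|=(q-1)\,|G_0|\geq (q^t-1)/e$, and since $e<q$ for every prime power $q=p^e$, an elementary estimate gives $|\widehat G|>q^{t-1}$ when $t\geq2$. Writing $\cO=\GF(q^{t'})$, each element of $\cS$ is a $\GF(q^{t'})$--subspace of $\GF(q)$--dimension $t$, so $t'\mid t$; while $q^{t'}-1=|\cO^\ast|\geq|\widehat G|>q^{t-1}$ forces $t'\geq t$, hence $t'=t$ and $\cO=\GF(q^t)$ (for $t=1$ the equality $t'=t$ is already contained in $t'\mid t$). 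Consequently $V$ is an $r$--dimensional vector space over $\GF(q^t)$ in which each member of $\cS$ is a $1$--dimensional subspace, so $\cS$ is exactly the $\GF(q)$--linear representation of $\PG(r-1,q^t)$ and is therefore Desarguesian.

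In summary, the content of the argument is entirely concentrated in the field property of $\cO$, for which the disjointness--and--spanning trick in the second paragraph is the key device; the passage from $G$ to its linear subgroup $\widehat G$ is only bookkeeping, resting on the trivial inequality $e<q$, and the rest is a direct appeal to the characterisation of Desarguesian spreads via the $\GF(q)$--linear representation.
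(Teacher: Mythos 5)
Your argument is correct in substance, but it follows a genuinely different route from the paper. The paper never touches $\mathrm{End}_{\GF(q)}(V)$ directly: it passes to the translation design $A(\cS)$, its translation group $T$ and the family of stabilisers $T_L$, and invokes \cite{Bonetti-Lunardon} both for the fact that the kernel $K$ of that family is a field and for the criterion that $\cS$ is Desarguesian exactly when $T_L$ is $1$--dimensional over $K$; the stated collineation group then arises from $K^*$. You instead work with the kernel of the spread inside $\mathrm{End}_{\GF(q)}(V)$, prove the division--ring property by hand (the disjointness/spanning trick, plus Wedderburn), and --- this is the most valuable difference --- give an explicit quantitative argument for sufficiency: the companion--automorphism homomorphism bounds the index of the linear part $G_0$ by $e$, the lift $\widehat G$ has order $(q-1)|G_0|\geq (q^t-1)/e>q^{t-1}$, and together with $t'\mid t$ this forces the kernel to be $\GF(q^t)$, whence $\cS$ is (isomorphic to) the $\GF(q)$--linear representation and so Desarguesian by \cite{Bruck-Bose1964,Segre1964,Barlotti-Cofman1974}. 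The paper essentially asserts this implication as part of a quoted equivalence, so your version is more self-contained on that direction, at the price of using Wedderburn and the fundamental theorem of projective geometry (harmless here since $rt\geq 4$ once $r,t\geq 2$, the remaining cases being trivial as you note).

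One presentational slip should be repaired: as you define $\cO$ (the zero map together with the \emph{automorphisms} preserving each spread element), closure under addition is not immediate, and your injectivity argument is vacuous for automorphisms. Define $\cO$ instead as the set of \emph{all} $\GF(q)$--endomorphisms mapping each element of $\cS$ into itself; this is obviously a subring containing $\GF(q)I$, and your disjointness argument, applied verbatim to a nonzero endomorphism $a\in\cO$, shows $a$ is injective, hence invertible with $a^{-1}\in\cO$. You should also spell out why some $U_1\neq U_0$ with $a|_{U_1}\neq 0$ exists: for $r\geq 2$ the spread elements other than $U_0$ span $V$, so $a$ cannot vanish on all of them. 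With these adjustments the proof stands.
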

\begin{proof}
Denote by $T$ the translation group of $A({\cS})$,
that is, the group of all elations of $\PG(rt,q)$ with axis $\PG(rt-1,q)$.
Let $O$ be a fixed point of $A({\cS})$.
For each line $L$ of $A({\cS})$ incident with $O$, denote by
$T_L$ the stabiliser of $L$ in $T$; take also
${\cal K}$ to be the family of all the subgroups $T_L$ of $T$.
We know that $T$ is elementary abelian and $T_L$ is transitive on the
points of the line $L$. 
We may now
introduce a new incidence structure $\pi$,
whose points are the elements of $T$ and whose lines
are the lateral classes of the subgroups $T_L$.
Given a point $P\in A({\cS})$, denote by $\tau_{O,P}$ the element
of $T$ which maps $O$ into $P$.
The map $P \mapsto \tau_{O,P}$ turns out to be an isomorphism
between $A({\cS})$ and $\pi$.

The \emph{kernel} $K$ of $\cal K$
is the set of all the endomorphisms $\alpha$ of $T$
such that $T_L^{\alpha}\subset T_L$.
It has been shown
in \cite{Bonetti-Lunardon}  that $K$ is a field.
Hence, $T$ is a vector space over $K$ and each
element of ${\cal K}$ is a vector subspace of $T$.
Given any central collineation $\omega$ of $\PG(rt,q)$
with axis $\PG(rt-1,q)$ and centre $O$,
the map $\bar{\omega}$ of $T$ into itself defined
 by $\bar{\omega}:\tau \mapsto \omega \tau \omega$ is
an element of $K$.
Hence, $K$ contains a subfield isomorphic to $\FF=\GF(q)$. 

Let now $E$ be a subfield of $K$ and denote by $\PG(T,E)$
the projective space associated to $T$, regarded
as a vector space over $E$,
and by ${\cal K}(E)$ the spread of $\PG(T,E)$ induced by $\cal K$.
The designs $\pi$ and $A({\cal K}(E))$ are isomorphic.
Furthermore, the $(t-1)$--spreads ${\cS}$ and ${\cal K}(F)$
are also isomorphic, that is, there is a collineation
$\tau$ of $\PG(rt-1,q)$ such that ${\cS}^{\tau}={\cal K}(F)$,
see \cite{Bonetti-Lunardon}. 
 
It follows that
the spread $\cS$ is Desarguesian if and only if
$T_L$ has dimension $1$ over $K$, see \cite{Bonetti-Lunardon}.
This condition is equivalent to require that $K$ has order $q^{t}-1$
and defines a collineation group of $\PG(rt-1,q)$
of order $q^{t-1}+q^{t-2}+\cdots+q+1$ fixing all
the elements of  ${\cal K}(F)$.
 \end{proof}

\begin{thm}
\label{thm:4}
Let $S$ be a Singer cycle of $\PG(n-1,q)$, with $n=rt$.
Denote  by  $S_1$ and $S_2$ respectively  the subgroups of $S$ of order
$\frac{q^t-1}{q-1}$ and  $\frac{q^n-1}{q^t-1}$, so that
$S=S_1 \times S_2$.
Then, there is a Desarguesian $(t-1)$--spread $\cS$ of $\PG(n-1,q)$
such that $S_2$ acts regularly on $\cS$ and $S_1$ fixes all its elements.
\end{thm}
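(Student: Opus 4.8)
The plan is to realise the required Desarguesian spread explicitly through the field tower $\GF(q)\subseteq\GF(q^t)\subseteq\GF(q^n)$ and then to read off the actions of $S_1$ and $S_2$ from this model. Identify $\PG(n-1,q)$ with $\PG(\GF(q^n),\GF(q))$ and take $S=\langle\sigma\rangle$, where $\sigma\colon x\mapsto\alpha x$ for a generator $\alpha$ of $\GF(q^n)^{*}$, so that $\sigma$ has order $N=\tfrac{q^{n}-1}{q-1}$ as a collineation. Since $t\mid n$ we have $\GF(q^t)\subseteq\GF(q^n)$, and $\GF(q^n)$ is an $r$--dimensional vector space over $\GF(q^t)$; for $x\in\GF(q^n)^{*}$ the set $P(x)=\GF(q^t)x$ is a $t$--dimensional $\GF(q)$--subspace, hence a $(t-1)$--dimensional projective subspace of $\PG(n-1,q)$. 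I then let $\cS=\{\,P(x)\mid x\in\GF(q^n)^{*}\,\}$. Distinct members of $\cS$ are disjoint and every point of $\PG(n-1,q)$ lies on exactly one of them, so $\cS$ is a $(t-1)$--spread of $\PG(n-1,q)$: it is precisely the $\GF(q)$--linear representation of $\PG(r-1,q^{t})$ recalled above, and by the results quoted there it is Desarguesian (alternatively this will follow from Theorem~\ref{thm:3} once the next step is done).

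Next I pin down $S_1$. With $N_1=\tfrac{q^t-1}{q-1}$ and $N_2=\tfrac{q^n-1}{q^t-1}$, so that $N=N_1N_2$, the unique subgroup of $S$ of order $N_1$ is $S_1=\langle\sigma^{N_2}\rangle$, and $\sigma^{N_2}$ is the collineation induced by the multiplication $x\mapsto\beta x$ with $\beta=\alpha^{N_2}$. Because $N_2$ divides $q^n-1$, the element $\beta$ has multiplicative order $\tfrac{q^n-1}{N_2}=q^t-1$, hence $\langle\beta\rangle$ is the unique subgroup of $\GF(q^n)^{*}$ of order $q^t-1$, that is, $\langle\beta\rangle=\GF(q^t)^{*}$. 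So $S_1$ consists exactly of the collineations induced by the maps $x\mapsto cx$ with $c\in\GF(q^t)^{*}$. Since $c\,\GF(q^t)x=\GF(q^t)x$ for every $c\in\GF(q^t)^{*}$ and every $x\neq 0$, each such collineation fixes $P(x)$ for all $x$; therefore $S_1$ fixes every element of $\cS$. As $|S_1|=q^{t-1}+\cdots+q+1$, Theorem~\ref{thm:3} then also confirms that $\cS$ is Desarguesian.

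It remains to show that $S_2$ acts regularly on $\cS$. Every power of $\alpha$ is $\GF(q^t)$--linear, so $S$ stabilises $\cS$, and under the bijection $\langle x\rangle_{\GF(q^t)}\leftrightarrow P(x)$ the permutation action of $S$ on $\cS$ coincides with the action on the points of $\PG(r-1,q^{t})$ of the Singer cycle generated by $y\mapsto\alpha y$ (note that $\alpha$ generates $\GF((q^{t})^{r})^{*}$). By definition a Singer cycle is regular on points, so the image of $S$ in the symmetric group on $\cS$ is regular and has order $N_2$; consequently the kernel of this action has order $N/N_2=N_1$, and since it contains $S_1$ it equals $S_1$. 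Finally, because $S=S_1\times S_2$, the projection $S\to S/S_1$ restricts to an isomorphism $S_2\xrightarrow{\,\sim\,}S/S_1$ intertwining the action of $S_2$ on $\cS$ with the (regular) action of $S/S_1$; hence $S_2$ acts regularly on $\cS$, which completes the proof.

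The one point needing care is the arithmetic in the second step: choosing the right power $\sigma^{N_2}$, checking that $\beta=\alpha^{N_2}$ has order exactly $q^t-1$ (so that $\langle\beta\rangle$ is all of $\GF(q^t)^{*}$ and not a proper subgroup), and then verifying that the kernel of the action of $S$ on $\cS$ is not merely contained in, but equal to, $S_1$. Everything else is a routine translation between the field description of the Singer cycle and the subspace geometry of the linear representation; in particular the regularity of $S_2$ on $\cS$ uses the hypothesis $S=S_1\times S_2$ in an \emph{essential} way.
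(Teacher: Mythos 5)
Your proof is correct and follows essentially the same route as the paper: the field model $\PG(\GF(q^n),\GF(q))$ with the spread $\cS=\{\,\GF(q^t)x\,\}$, the identification of $S_1$ with the collineations $x\mapsto cx$, $c\in\GF(q^t)^{*}$, which fix every spread element, and Theorem~\ref{thm:3} (equivalently the $\GF(q)$--linear representation) for the Desarguesian property. The only difference is in the last step, where the paper exhibits $S_2$ via a primitive element $\gamma$ of $\GF(q^n)$ over $\GF(q^t)$ and checks regularity directly on the cosets $\GF(q^t)\gamma^j$, whereas you identify the action of $S$ on $\cS$ with the Singer action on the points of $\PG(r-1,q^t)$ and pass to the quotient by the kernel $S_1$, using $S=S_1\times S_2$ --- a slightly more careful rendering of the same idea.
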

\begin{proof}
Write $\PG(n-1,q)=\PG(\GF(q^n),\GF(q))$ and assume $S$ to
be the Singer cycle spanned by the collineation
$\sigma :x \mapsto \alpha x$, where $\alpha$
is a generator of the multiplicative group of $\GF(q^n)$. 

As $t$ divides $n$, the element $\beta=\alpha^{\frac{q^n-1}{q^r-1}}$
is a generator of the multiplicative group of the subfield $\GF(q^t)$
of $\GF(q^n)$.
Now let $S_1$ be the subgroup of $S$ generated by
$\sigma_1=\sigma^{\frac{q^n-1}{q^r-1}}$. 
Then, ${\cS}=\{\,\GF(q^t)x \mid x \in\GF(q^n)\,\}$ is a
$(t-1)$--spread of $\PG(n-1,q)$ which is preserved
by $S_1$.
As $S_1$ acts over each member of $\cS$ as a Singer cycle,
by Theorem \ref{thm:3} we conclude that $\cS$ is Desarguesian.

If $\gamma$ is a primitive element
 of $\GF(q^n)$ over $\GF(q^t)$, then the collineation defined by
 the map $\sigma_2: x \mapsto \gamma x$ defines a subgroup
 $S_2$ of $S$ of order $\frac{q^n-1}{q^t-1}$.
By construction, $S =S_1\times S_2$.
 As
 ${\cS}=\{\,\GF(q^t)\gamma^j \mid (q^t)^{r-1}+\cdots+q^t+1\geq j\geq
 0\,\}$,
 the group $S_2$ preserves the $(t-1)$--spread
 $\cS$ and acts regularly on its elements.
 \end{proof}

\section{Decomposition of $\PG(n-1,q)$ for $n$ odd}

For odd $n$ any line of $\PG(n-1,q)$ has an orbit of
length $\frac{q^n-1}{q-1}$ under the action of a Singer cycle $S$ of
the space.
Hence, we can decompose the set of all lines of $\PG(n-1,q)$ into
$\frac{q^{n-1}-1}{q^2-1}$ orbits under the action of $S$.
Each of these orbits, say $i$ for $1\leq i\leq
\frac{q^{n-1}-1}{q^2-1}$,  defines a cyclic structure whose
incidence matrix $M_i$  is circulant.
Hence, the incidence matrix $M$ of $\bPG^{(1)}$ has the following structure:
$$M=\left(
  \begin{array}{c}
    M_1\\
    M_2\\
    \vdots \\
    M_{(q^{n-1}-1)/(q^2-1)}
  \end{array}\right).$$ 

A starter set of $\PG(n-1,q)$ can be obtained as follows.
Let $\sigma$ be a generator of the Singer cycle $S$ and choose
a point $P$. Fix a line $l$ incident with $P$,
suppose that $i_0=0, i_1, \dots,i_{q}$ are integers,
 and
 $P=P^{\sigma^{i_0}}, P_1=P^{\sigma^{i_1}}, \dots,
 P_q=P^{\sigma^{i_q}}$
 are the points of $l$.
 Then, $l_j=l^{\sigma^{i_j}}$, $0\leq j\leq q$
 are exactly the $q+1$ lines of the orbit of $l$
 under the action of $S$ which are incident with $l$. 
Hence, a starter set of $\PG(n-1,q)$ is just 
 ${\bf S}=\{s_1, s_2,\dots, s_{\frac{q^{n-1}-1}{q^2-1}}\}$,
 consisting of $\frac{q^{n-1}-1}{q^2-1}$ lines incident
 with $P$ such that, if $P^{\sigma^h}$ belongs to $s_i$,
 then $s_i^{\sigma^h}$ does not belong to ${\bf S}$.

\section{Decomposition of $\PG(n-1,q)$ for $n$ even}

Suppose now  $n=2t$ with $t>1$ and denote by $S$ a Singer cycle
of $\PG(n-1,q)$. 


Write $S=S_1\times S_2$,
where
$S_1$ has order $\frac{q^2-1}{q-1}$ and $S_2$ has
order $\frac{q^{2t}-1}{q^2+1}$.
By Theorem \ref{thm:4}, there is a Desarguesian line
spread $\cS$ of $\PG(2t-1,q)$
such that $S_1$ fixes all the lines of $\cS$ and $S_2$
acts regularly its elements.
 

\begin{lm}
The stabiliser in $S$ of a line $m$ not in $\cS$ is the identity. 
\end{lm}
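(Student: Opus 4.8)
The plan is to show that every $g\in S$ stabilising $m$ must fix one of the $q+1$ points of $m$; since $S$ acts regularly on the points of $\PG(2t-1,q)$, this forces $g=1$, so that the stabiliser $S_m$ of $m$ is trivial.

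First I would record two simple reductions. As $S$ is regular on points, the stabiliser in $S$ of any point is trivial; hence the setwise stabiliser $S_m$ of $m$ acts \emph{semiregularly} on the $q+1$ points lying on $m$, and therefore $|S_m|$ divides $q+1=|S_1|$. Since $S$ is cyclic it possesses exactly one subgroup of each order dividing $|S|$, and $S_1$ is its unique subgroup of order $q+1$; consequently $S_m\leq S_1$, and so, by Theorem \ref{thm:4}, every element of $S_m$ fixes each line of $\cS$.

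Next I would bring in the spread. Each point of $m$ lies on exactly one line of $\cS$, and no line of $\cS$ contains two points of $m$: two distinct points span a unique line, and that line would then coincide with $m$, against $m\notin\cS$. Hence the $q+1$ points $P_0,\dots,P_q$ of $m$ lie on $q+1$ pairwise distinct spread lines $\ell_0,\dots,\ell_q$, with $m\cap\ell_i=\{P_i\}$. Now pick any $g\in S_m$. Then $g$ fixes $\ell_i$ (because $g\in S_1$) and stabilises $m$ by hypothesis, so $g$ maps $m\cap\ell_i=\{P_i\}$ onto itself; thus $g$ fixes $P_i$, and regularity of $S$ on points gives $g=1$.

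I do not expect a genuine obstacle here. The only step that needs a little care is the group-theoretic reduction $S_m\leq S_1$, which combines the cyclicity of $S$ with the divisibility $|S_m|\mid(q+1)$ obtained from the semiregular action of $S_m$ on the points of $m$. The rest is the incidence-geometric observation that a line not belonging to a spread meets $q+1$ distinct members of that spread, together with the property—supplied by Theorem \ref{thm:4}—that $S_1$ fixes every element of $\cS$.
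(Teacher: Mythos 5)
Your proof is correct, and it takes a genuinely different route from the paper's. The paper's own argument is a counting one: it quotes from \cite{TXKLA} that exactly $\frac{q^{n}-1}{q^2-1}$ lines of $\PG(n-1,q)$ have non-trivial stabiliser in $S$, and then notes that the $\frac{q^{n}-1}{q^2-1}$ lines of $\cS$, each fixed by $S_1$, already exhaust this set, so any line outside $\cS$ must have trivial stabiliser. You avoid the external count altogether: regularity of the Singer cycle on points makes $S_m$ act semiregularly on the $q+1$ points of $m$, so $|S_m|$ divides $q+1$; cyclicity of $S$ (uniqueness of the subgroup of each order) then forces $S_m\leq S_1$; and since $S_1$ fixes every line of $\cS$ while $m\notin\cS$ meets $q+1$ pairwise distinct spread lines each in a single point, any $g\in S_m$ fixes such an intersection point and hence $g=\mathrm{id}$. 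Your version is self-contained and purely local --- it never needs to know the total number of lines with non-trivial stabiliser, nor that the spread lines are precisely those lines --- at the cost of a slightly longer argument; the paper's version is shorter but rests on the result cited from \cite{TXKLA}. The one step in your write-up that genuinely needs care is the reduction $S_m\leq S_1$, which you correctly base on the divisibility $|S_m|\mid q+1$ together with cyclicity; the rest is the elementary incidence observation that a line not in a line--spread meets $q+1$ distinct spread lines, each in exactly one point.
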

\begin{proof}
It has been proved in \cite{TXKLA}, that there are exactly
$\frac{q^{n}-1}{q^2-1}$ lines of $\PG(n-1,q)$ whose stabiliser in $S$
is different from the identity. 
Since any line of $\cS$ is fixed by $S_1$ and $\cS$ contains exactly
$\frac{q^{n}-1}{q^2-1}$ of them,
the stabiliser in $S$ of a line $m$ not in $\cS$ is the identity.
\end{proof}

The set of all the lines of
$\PG(n-1,q)$ can be decomposed into the set $\cS$ and
$q(q^{2t-4}+ q^{2t-6}+\cdots+ q^2+1)$ other
orbits under the action of $S$, each of length
$\frac{q^{2t}-1}{q-1}$.
These  orbits, say $i$, for $1\leq i\leq q(q^{2t-4}+ \cdots+ q^2+1)$
  define a cyclic structure whose
  incidence matrix $M_i$ is  circulant.
  Hence, the incidence matrix $M$ of
  $\bPG^{(1)}$ has in this case the following structure
  $$M=\left(\begin{array}{c}
      M_0\\
      M_1\\
      \vdots \\
      M_{q(q^{2t-2}-1)/(q^2-1)}
    \end{array}\right),$$
  where $M_0$ is the incidence matrix of the 
  structure induced on $\cS$.
  The points of $\PG(n-1,q)$ may be indexed
  in such a way as to
  have
  $$M_0=\left(\begin{array}{cccc}
      B_1 & B_2 &  \cdots & B_{q+1}
    \end{array}\right),$$
  where $B_1=B_2= \cdots=B_{q+1}$ is the identity matrix of
  order $q^{n-1}+q^{n-2}+\cdots +q+1$.






\section{The case of  $\PG(3,2^e)$ }
In order to investigate the details of what happens in
$\PG(3,2^{e})$ we need to recall some properties of
elliptic quadrics and regular spreads in this space.
The interested reader might look at \cite{H} for a proof of
the results.

Denote by $Q^-(3,q)$ the set of all points of $\PG(3,q)$ whose
homogeneous coordinates are solution of the equation
\[ X_0X_1+X_2^2+bX_2X_3+ cX_3^2=0, \]
where $b$ and $c$ are such that
as $\xi^2+b\xi+ c$ is an irreducible polynomial over $\GF(q)$.

A set of points $\cO$ is an \emph{elliptic quadric} if there is a
collineation $\tau$ of $\PG(3,q)$ such that ${\cO}^{\tau}=Q^-(3,q)$.
A line, which intersects $\cO$ in exactly one point is called 
\emph{tangent}.
Likewise,
a plane which intersects $\cO$ in a exactly one point
is also called \emph{tangent}.
The following properties are straightforward:
\begin{enumerate}[1.]
\item $\cO$ contains exactly $q^2+1$ points;
\item no three points of $\cO$ are collinear;
\item  a plane meets $\cO$ in either exactly one or
in $q+1$ points;
\item if $P$ is a point of $\cO$, then the lines tangent to
$\cO$ at $P$ are contained in the plane tangent to $\cO$ in
$P$;
\item there is a subgroup of order $q^2+1$ of $\PGL(4,q)$ which acts
transitively on $\cO$.
\end{enumerate}

Let now $\cS$ be a regular spread of $\PG(3,2^e)$.
Then, there is an elliptic quadric $\cO$ such that:
\begin{enumerate}[1.]
\item any line of $\cS$ is tangent to $\cO$;\\
\item the subgroup $S_2$ of order $q^2+1$ of the Singer cycle
  $S$  which stabilises $\cS$, acts regularly on $\cO$.
\end{enumerate}

Let $S_1=\langle \tau \rangle $ be the subgroup of $S$ of order $q+1$
 which fixes all the lines of $\cS$
 and acts transitively on the points of any line of the spread.
 Then, ${\cO}_i={\cO}^{\tau^i}$, for $i=0,1, \dots,q$
 is an elliptic quadric and any line of $\cS$ is tangent to ${\cal
   O}_i$;
 indeed, 
 $\{{\cO}_0, {\cO}_1,\dots,{\cO}_q\}$
 is a partition of point--set of $\PG(3,q)$.

\begin{lm}
\label{lem:1}
For any line $l$ of $\PG(3,q)$ not in $\cS$ there is a unique
quadric $\cO_i$ such that $l$ is tangent to $\cO_i$.
\end{lm}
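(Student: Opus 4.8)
The plan is to pin down, for a given line $l\notin\cS$, the number of quadrics $\cO_i$ to which $l$ is tangent, by combining a parity constraint on the single line $l$ with a global double count taken over all lines of $\PG(3,q)$ not in $\cS$.

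First I would work locally. Fix $l\notin\cS$. Since $\{\cO_0,\dots,\cO_q\}$ is a partition of the point set of $\PG(3,q)$, the $q+1$ points of $l$ are distributed among the $q+1$ quadrics, and, because no three points of an elliptic quadric are collinear, each $\cO_i$ meets $l$ in $0$, $1$ or $2$ points. Let $a$, $b$, $c$ be the numbers of indices $i$ with $|l\cap\cO_i|$ equal to $0$, $1$, $2$ respectively. Then $a+b+c=q+1$ and $b+2c=q+1$, hence $a=c$ and $b=q+1-2a$. As $q$ is even, $q+1$ is odd, so $b$ is odd; in particular $b\geq 1$, i.e. every line not in $\cS$ is tangent to at least one $\cO_i$. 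The lemma is the sharper assertion that $b=1$.

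Next I would count in two ways the set $\Sigma$ of pairs $(l,\cO_i)$ with $l$ a line not in $\cS$ tangent to $\cO_i$. Fix $\cO_i$: at each of its $q^2+1$ points $P$ the tangent plane meets $\cO_i$ only in $P$, so all $q+1$ lines through $P$ in that plane are tangent, and by the property that the tangent lines at $P$ all lie in the tangent plane these are exactly the tangent lines at $P$; since a tangent line is tangent at a unique point, $\cO_i$ has precisely $(q+1)(q^2+1)$ tangent lines. Of these the $q^2+1$ lines of $\cS$ are tangent to $\cO_i$, leaving $q(q^2+1)$ tangent lines not in $\cS$; summing over $i=0,\dots,q$ gives $|\Sigma|=(q+1)\,q\,(q^2+1)$. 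On the other hand $|\Sigma|=\sum_{l\notin\cS}b(l)$, and the number of lines of $\PG(3,q)$ not in $\cS$ equals $(q^2+1)(q^2+q+1)-(q^2+1)=q(q+1)(q^2+1)$. Therefore $\sum_{l\notin\cS}b(l)$ equals the number of non-spread lines; since each $b(l)\geq 1$ by the first step, every $b(l)=1$, which is the claim.

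The only genuinely delicate point — the main obstacle — is obtaining the local inequality $b(l)\geq 1$ rather than merely $b(l)\geq 0$; this is exactly where the hypothesis that $q$ is even enters, via the parity of $q+1$. (For $q$ odd the counting identity of the third step would still hold, but the parity argument collapses, and in fact a line could then miss every $\cO_i$.) Everything else is bookkeeping with the recalled properties of elliptic quadrics together with the stated facts that every line of $\cS$ is tangent to each $\cO_i$ and that $\{\cO_0,\dots,\cO_q\}$ partitions $\PG(3,q)$.
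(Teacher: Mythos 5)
Your proof is correct, but it reaches the conclusion from the opposite direction to the paper. The paper first proves the \emph{upper} bound $b(l)\le 1$: since $q$ is even, each $\cO_i$ induces a symplectic polarity, and a line totally isotropic for two of these polarities must lie in $\cS$, so no line outside $\cS$ is tangent to two of the quadrics; the count of tangent lines not in $\cS$, namely $(q+1)q(q^2+1)$, then matches the number of non--spread lines and forces existence. You instead prove the \emph{lower} bound $b(l)\ge 1$ by a purely combinatorial parity argument: the partition $\{\cO_0,\dots,\cO_q\}$ together with the fact that no three points of an elliptic quadric are collinear gives $a+b+c=q+1$ and $b+2c=q+1$, hence $b=q+1-2c$ is odd (here is where $q$ even enters for you), and the same global count $\sum_{l\notin\cS}b(l)=q(q+1)(q^2+1)=|\{l\notin\cS\}|$ then forces $b(l)=1$ for every $l$. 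Both routes use the same pre--lemma facts (the partition, the tangency of the spread lines to every $\cO_i$, and the tangent--line count $(q+1)(q^2+1)$, which you rederive from the tangent--plane properties); the difference is that you avoid the polarity machinery and the cited characterisation of $\cS$ as the common totally isotropic lines, at the price of the slightly less transparent observation that for each individual line the tangency count is odd. Your argument is arguably more elementary and self--contained, while the paper's makes the role of the symplectic geometry (and hence of the hypothesis $q=2^e$) more structurally visible.
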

\begin{proof}
As $q$ is even, there is a symplectic polarity
$\perp$ of $\PG(3,q)$ such that
a line is totally isotropic with respect to $\perp$
if and only if it is tangent to $\cO$.

Call $\pi$ the symplectic polarity induced by ${\cO}_i$ for $i\neq 0$.
Then, a line of $\PG(3,q)$ is totally isotropic simultaneously respect
to $\perp$ and to $\pi$ if and only if $l$ belongs
to $\cS$, see\cite{H}.
Hence, no line is tangent to both ${\cO}$ and ${\cO}_i$.
Thus, ${\cO}_i$ and ${\cO}_j$ do not have any common tangent for $i\neq j$.

Any elliptic quadric ${\cO}_i$
has $(q^2+1)(q+1)$ tangent lines; as
$q^2+1$ of these belong to $\cS$,
we obtain that there are $(q^2+1)q(q+1)$ lines of $\PG(3,q)$
which are tangent to exactly one of the quadrics
${\cO}_0, {\cO}_1,\dots,{\cO}_q$.
Since the number of the lines of $\PG(3,q)$ not in $\cS$ is
$(q^2+1)q(q+1)$, this yields the lemma.
\end{proof}

We are now in the position to state the main theorem
of this section, namely
we provide a geometric construction of a starter set of $\PG(3,q)$.
\begin{thm}
The  lines tangent to $\cO$ through a fixed point $P$
form a starter set of $\PG(3,q)$.
\end{thm}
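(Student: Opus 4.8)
The plan is to verify that the set $\mathcal{T}$ of lines tangent to $\cO$ through the fixed point $P$ satisfies the two defining properties of a starter set from Section~2: (1) no two distinct members of $\mathcal{T}$ lie in the same $S$-orbit, and (2) every line of $\PG(3,q)$ lies in the $S$-orbit of some member of $\mathcal{T}$. First I would record the basic enumeration. Since $\cO$ is an elliptic quadric and $P\in\cO$, property~(4) of elliptic quadrics says the tangent lines at $P$ all lie in the tangent plane at $P$; together with property~(2) (no three points of $\cO$ collinear) this forces exactly $q+1$ tangent lines through $P$, namely the lines of the tangent plane through $P$ other than\dots\ in fact all $q+1$ lines of that plane through $P$ are tangent. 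So $|\mathcal{T}|=q+1$. I would then invoke the count from Section~6: the number of orbits of $S$ on lines not in $\cS$ is $q(q^{2t-4}+\cdots+1)$, which for $n=2t=4$ (i.e.\ $t=2$) equals exactly $q$. Since none of the lines of $\mathcal{T}$ lies in $\cS$ (a line of $\cS$ is tangent to $\cO$ but also fixed by $S_1$, and more to the point the lines through $P$ tangent to $\cO$ are the $q+1$ lines of the tangent plane at $P$, only one of which---the unique spread line through $P$---belongs to $\cS$), \dots\ wait: one of the $q+1$ tangents through $P$ \emph{is} the spread line. I must be careful here. Let me restructure: of the $q+1$ tangent lines through $P$, exactly one is the member of $\cS$ through $P$ (since every point lies on a unique spread line, and spread lines are tangent to $\cO$), and the remaining $q$ are not in $\cS$. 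It is this set of $q$ lines that must biject with the $q$ non-spread orbits.

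So the refined plan is: let $\mathcal{T}'$ be the $q$ tangent lines through $P$ that are not in $\cS$. I would show $\mathcal{T}'$ is a transversal for the $S$-action on the $q$ orbits of non-spread lines, and handle $\cS$ itself separately (it is a single $S_1$-stable configuration, already accounted for by $M_0$ in the block decomposition, so in the starter-set sense the spread line through $P$ together with the $S_1$-shifts recovers $\cS$). For the transversal claim, the key geometric input is Lemma~\ref{lem:1}: every line not in $\cS$ is tangent to exactly one of the $q+1$ quadrics $\cO_0,\dots,\cO_q$, where $\cO_i=\cO^{\tau^i}$ and $\langle\tau\rangle=S_1$. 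Given a line $m\notin\cS$, it is tangent to a unique $\cO_j$, say at the point $Q$. Since $S_2$ acts regularly on $\cO=\cO_0$, there is a (unique) $g\in S_2$ with $Q^{g^{-1}}\in\cO_0$ landing on\dots\ hmm, I want to move $m$ to a tangent of $\cO_0$ through $P$. Here is the cleaner route: $S_2$ is regular on $\cO_0$, and it commutes with $S_1$ and permutes the $\cO_i$ trivially? No---$S_2$ stabilises $\cS$ but conjugating $\cO_0$ by $S_2$ need not fix $\cO_0$. Actually from the stated properties, $S_2$ of order $q^2+1$ acts regularly on $\cO$ \emph{itself}, so $S_2$ does stabilise $\cO=\cO_0$ setwise, hence stabilises each $\cO_i=\cO_0^{\tau^i}$ (as $S_1,S_2$ commute). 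Good. Then: the full Singer cycle $S=S_1\times S_2$ acts on the set of (quadric, point-on-quadric) pairs; $S_1$ permutes $\{\cO_0,\dots,\cO_q\}$ cyclically and regularly, and $S_2$ acts regularly on the points of each $\cO_i$. Therefore $S$ acts regularly on $\bigcup_i \cO_i = \PG(3,q)$, which we already knew, but more usefully: given $m\notin\cS$ tangent to $\cO_j$ at $Q$, choose the unique $\sigma^k\in S$ with $Q^{\sigma^k}=P$; then $m^{\sigma^k}$ passes through $P$ and is tangent to $\cO_j^{\sigma^k}=\cO_0$ (since $\sigma^k$ sends $\cO_j$, the unique quadric tangent to $m$, to the unique quadric tangent to $m^{\sigma^k}$ through the image point $P\in\cO_0$). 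Thus $m^{\sigma^k}\in\mathcal{T}'$, proving property~(2).

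For property~(1): suppose $s,s'\in\mathcal{T}'$ and $s' = s^{\sigma^k}$ for some $\sigma^k\in S$. Each of $s,s'$ is tangent to $\cO_0$ at a unique point, say $R,R'$ respectively, with $R,R'\in\cO_0$; moreover $P$ lies on both $s$ and $s'$. If $R=P$ then $s$ would be a tangent at $P$ distinct from $s$'s actual tangent point---but a tangent line meets $\cO_0$ in one point, and $P\in s\cap\cO_0$, so $R=P$; hence $R=R'=P$. Then $\sigma^k$ fixes $P$ (it sends the tangent point of $s$ to that of $s'$, both equal to $P$), so by the Singer property $\sigma^k=\mathrm{id}$ and $s=s'$. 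This is clean; the one subtlety I should double-check is that $\sigma^k$ really must send the tangent point of $s$ to the tangent point of $s'$: since $s'=s^{\sigma^k}$ is tangent to $\cO_0$ and also tangent to $\cO_0^{\sigma^{k}}$; but $s'\notin\cS$ so by Lemma~\ref{lem:1} it is tangent to a \emph{unique} quadric among $\cO_0,\dots,\cO_q$, forcing $\cO_0^{\sigma^k}=\cO_0$, i.e.\ $\sigma^k\in S_2$ (the stabiliser of $\cO_0$ in $S$, which has order $q^2+1$ and equals $S_2$); and $S_2$ acts on $\cO_0$ carrying $R$ to $R'=P$. Fixing $P$ under $S_2$ then gives $\sigma^k = \mathrm{id}$.

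The main obstacle, and the part I would write out most carefully, is precisely this bookkeeping about which quadric a transported line is tangent to---i.e.\ using Lemma~\ref{lem:1}'s uniqueness to pin down $\sigma^k\in S_2$ and then exploiting regularity of $S_2$ on $\cO_0$. The rest (counting tangents through $P$, matching the orbit count $q$ from Section~6, treating the spread line separately) is routine once the dictionary ``non-spread line $\leftrightarrow$ its unique tangent quadric and tangent point'' is set up. I would also remark that this shows the $q$ lines of $\mathcal{T}'$ are in natural bijection with the $q$ non-spread orbits, and together with the block $M_0$ for $\cS$ this recovers the entire circulant decomposition of $\bPG^{(1)}$ described in Section~6.
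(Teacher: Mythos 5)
Your proof is correct and follows essentially the paper's own route: both arguments rest on Lemma~\ref{lem:1} (a line not in $\cS$ is tangent to exactly one of the quadrics $\cO_0,\dots,\cO_q$) combined with the regularity of the Singer action on points, used to force any $\delta\in S$ carrying one non-spread tangent at $P$ to another to fix $P$ and hence be the identity. The one point where you go beyond the paper is exhaustiveness: the paper's proof only establishes pairwise $S$-inequivalence of the tangents at $P$, leaving the covering property to the orbit count of Section~6 (for $t=2$ there are exactly $q$ non-spread orbits plus the single orbit $\cS$), whereas you also prove covering directly, transporting an arbitrary non-spread line $m$, tangent to $\cO_j$ at $Q$, by the unique $\sigma^k\in S$ with $Q^{\sigma^k}=P$, and using the fact that $\{\cO_0,\dots,\cO_q\}$ is an $S$-invariant partition of the point set to conclude that $m^{\sigma^k}$ is tangent to $\cO_0$ at $P$. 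This makes the starter-set property self-contained and exhibits the explicit bijection between the $q$ non-spread tangents at $P$ and the $q$ non-spread orbits, at the cost of having to verify (as you do) that $S_2$ stabilises each $\cO_i$ and that the stabiliser of $\cO_0$ in $S$ is exactly $S_2$. One small slip in a side remark: the orbit $\cS$ is not recovered by ``$S_1$-shifts'' of the spread line through $P$ --- $S_1$ fixes every line of $\cS$, so those shifts give nothing new; it is the $S_2$-orbit (equivalently the full $S$-orbit) of that line which equals $\cS$, by Theorem~\ref{thm:4}. This does not affect the argument, since all that is needed is that $\cS$ is a single $S$-orbit represented by the spread tangent through $P$.
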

\begin{proof}
 Let $\alpha$ be the tangent plane to $\cO$ at $P$ and denote by 
$\{m_0,m_1, \dots, m_q\}$ be the $q+1$ tangents to $\cO$ at $P$.
Assume that $m_0$ belongs to  $\cS$.

If there is $\delta\in S$ such that $m_i^{\delta}=m^j$ with $i,j\neq
0$, then $m_0^{\delta}\in{\cS}$, since $\cS$ is fixed by all
the elements of $S$.


Suppose now that $m_0^{\delta}\neq m_0$. 
If it were ${\cO}={\cO}^{\delta}$,  then the line $m_j$ 
would be incident with
both  $P$ and $P^{\delta}$, both in $\cO$; hence,
$m_j$ could not be a tangent line. Thus,
${\cO}\neq {\cO}^{\delta}$.
This would mean that $m_j$ is tangent to both ${\cO}$
and ${\cO}^{\delta}$ --- a contradiction by Lemma \ref{lem:1}. 
It follows that $m_0^{\delta}= m_0$.  
In particular, it follows from this argument
that $\delta$ is in $S_1$.

To conclude the proof, observe that
if $P^{\delta}\neq P$, then $m_j$ is tangent to
${\cO}^{\delta}$ at $P^{\delta}$ and to ${\cO}$ at $P$.
Then, ${\cO}={\cO}^{\delta}$  and $\delta=\mathrm{id}$.
\end{proof}    

\section*{Thanks}
The authors wish to express their gratitude to
prof. G.~Lunardon for several insightful discussions
on the topics here--within investigated.
\bibliographystyle{amsplain}
\bibliography{gls}

\providecommand{\bysame}{\leavevmode\hbox to3em{\hrulefill}\thinspace}
\providecommand{\MR}{\relax\ifhmode\unskip\space\fi MR }
\providecommand{\MRhref}[2]{%
  \href{http://www.ams.org/mathscinet-getitem?mr=#1}{#2}
}
\providecommand{\href}[2]{#2}
\begin{thebibliography}{10}

\bibitem{AK}
E.~F. Assmus, Jr. and J.~D. Key, \emph{Designs and their codes}, Cambridge
  Tracts in Mathematics, vol. 103, Cambridge University Press, Cambridge, 1992.
  \MR{MR1192126 (93j:51003)}

\bibitem{NP1}
Alexander Barg, Evgueni Krouk, and Henk C.~A. van Tilborg, \emph{On the
  complexity of minimum distance decoding of long linear codes}, IEEE Trans.
  Inform. Theory \textbf{45} (1999), no.~5, 1392--1405. \MR{MR1699066
  (2000c:94016)}

\bibitem{Barlotti-Cofman1974}
A.~Barlotti and J.~Cofman, \emph{Finite {S}perner spaces constructed from
  projective and affine spaces}, Abh. Math. Sem. Univ. Hamburg \textbf{40}
  (1974), 231--241.

\bibitem{batten97}
L.~M. Batten, \emph{Combinatorics of finite geometries}, second ed., Cambridge
  University Press, Cambridge, 1997.

\bibitem{NP2}
L.~M.~J. Bazzi and S.~K. Mitter, \emph{Encoding complexity versus minimum
  distance}, IEEE Trans. Inform. Theory \textbf{51} (2005), no.~6, 2103--2112.
  \MR{MR1699066 (2000c:94016)}

\bibitem{BJL}
T.~Beth, D.~Jungnickel, and H.~Lenz, \emph{Design theory. {V}ol. {I}}, second
  ed., Encyclopedia of Mathematics and its Applications, vol.~69, Cambridge
  University Press, Cambridge, 1999.

\bibitem{Bonetti-Lunardon}
F.~Bonetti and G.~Lunardon, \emph{Sugli {$S$}-spazi di traslazione}, Boll. Un.
  Mat. Ital. A (5) \textbf{14} (1977), no.~2, 368--374.

\bibitem{Bruck-Bose1964}
R.~H. Bruck and R.~C. Bose, \emph{The construction of translation planes from
  projective spaces}, J. Algebra \textbf{1} (1964), 85--102.

\bibitem{Gal2}
R.~G. Gallager, \emph{Low-density parity-check codes}, IRE Trans. \textbf{IT-8}
  (1962), 21--28. \MR{MR0136009 (24 \#B2048)}

\bibitem{Gal1}
\bysame, \emph{Low density parity--check codes}, M.I.T. Press, 1963.

\bibitem{H}
J.~W.~P. Hirschfeld, \emph{Finite projective spaces of three dimensions},
  Oxford Mathematical Monographs, The Clarendon Press Oxford University Press,
  New York, 1985.

\bibitem{PGOFF}
\bysame, \emph{Projective geometries over finite fields}, second ed., Oxford
  Mathematical Monographs, The Clarendon Press Oxford University Press, New
  York, 1998.

\bibitem{LDPC}
Y.~Kou, S.~Lin, and M.~P.~C. Fossorier, \emph{Low-density parity-check codes
  based on finite geometries: a rediscovery and new results}, IEEE Trans.
  Inform. Theory \textbf{47} (2001), no.~7, 2711--2736.

\bibitem{Lunardon1997}
G.~Lunardon, \emph{Normal spreads}, Geom. Dedicata \textbf{75} (1999), no.~3,
  245--261.

\bibitem{M1}
D.~J.~C. MacKay, \emph{Good error--correcting codes based on very sparse
  matrices}, IEEE Trans. Inform. Theory \textbf{45} (1999), no.~2, 399--431.

\bibitem{WC}
Samuel~J. MacMullan and Oliver~M. Collins, \emph{A comparison of known codes,
  random codes, and the best codes}, IEEE Trans. Inform. Theory \textbf{44}
  (1998), no.~7, 3009--3022. \MR{MR1672071 (99j:94082)}

\bibitem{Me}
R.~J. McEliece, \emph{The theory of information and coding}, second ed.,
  Encyclopedia of Mathematics and its Applications, vol.~86, Cambridge
  University Press, Cambridge, 2002. \MR{MR1899280 (2002k:94001)}

\bibitem{ru}
T.~J. Richardson and R.~L. Urbanke, \emph{Efficient encoding of low-density
  parity-check codes}, IEEE Trans. Inform. Theory \textbf{47} (2001), no.~2,
  638--656.

\bibitem{Segre1964}
B.~Segre, \emph{Teoria di {G}alois, fibrazioni proiettive e geometrie non
  desarguesiane}, Ann. Mat. Pura Appl. (4) \textbf{64} (1964), 1--76.

\bibitem{TXKLA}
H.~Tang, J.~Xu, Y.~Kou, S.~Lin, and K.~Abdel-Ghaffar, \emph{On algebraic
  construction of {G}allager and circulant low-density parity-check codes},
  IEEE Trans. Inform. Theory \textbf{50} (2004), no.~6, 1269--1279.

\end{thebibliography}
\end{document}